\theoremstyle{plain}
\numberwithin{equation}{section}
\newtheorem{thm}{Theorem}[section]
\newtheorem{cor}[thm]{Corollary}
\newtheorem{lem}[thm]{Lemma}
\newenvironment{exam}[1]
{\begin{flushleft}\textbf{Example #1}.\enspace}%
{\end{flushleft}}
\newcommand{\positive}{{\mathbb N}}
\newcommand{\complex}{{\mathbb C}}
\newcommand{\real}{{\mathbb R}}
\newcommand{\cscript}{{\mathcal C}}
\newcommand{\dscript}{{\mathcal D}}
\newcommand{\kscript}{{\mathcal K}}
\newcommand{\lscript}{{\mathcal L}}
\newcommand{\vhat}{\widehat{v}}
\newcommand{\fbar}{\overline{f}}
\newcommand{\xunderbar}{\underline{x}}
\newcommand{\ctimes}{\mathrel{\mathlarger\cdot}}
\newcommand{\ab}[1]{\left|#1\right|}
\newcommand{\brac}[1]{\left\{#1\right\}}
\newcommand{\paren}[1]{\left(#1\right)}
\newcommand{\sqbrac}[1]{\left[#1\right]}
\newcommand{\elbows}[1]{{\left\langle#1\right\rangle}}
\begin{document}

\title{CURVATURE AND\\QUANTUM MECHANICS\\ON COVARIANT CAUSAL SETS
}
\author{S. Gudder\\ Department of Mathematics\\
University of Denver\\ Denver, Colorado 80208, U.S.A.\\
sgudder@du.edu
}
\date{}
\maketitle

\begin{abstract}
This article begins by reviewing the causal set approach in discrete quantum gravity. In our version of this approach a special role is played by covariant causal sets which we call $c$-causets. The importance of $c$-causets is that they support the concepts of a natural distance function, geodesics and curvature in a discrete setting. We then discuss curvature in more detail. By considering $c$-causets with a maximum and minimum number of paths, we are able to find $c$-causets with large and small average curvature. We then briefly discuss our previous work on the inflationary period when the curvature was essentially zero. Quantum mechanics on $c$-causets is considered next. We first introduce a free wave equation for $c$-causets. We then show how the state of a particle with a specified mass (or energy) can be derived from the wave equation. It is demonstrated for small examples that quantum mechanics predicts that particles tend to move toward vertices with larger curvature.
\end{abstract}

\section{Covariant Causal Sets}  
This article is based upon the following four guidelines. The universe is:
(1)\enspace Discrete,
(2) Structured,
(3)\enspace Expanding,
(4)\enspace Quantum Mechanical.
Guidelines 2, 3 and 4 are well-established and do not need discussion. However, Guideline~1
 is fairly unconventional so we shall briefly consider it. There is already evidence that the universe is discrete. We know that energy comes in discrete packets or quanta that are integer multiples of Planck's constant $h$. Also, electric charge only exists in integer multiples of the electron charge $e$ (or $\pm e/3$, $\pm 2e/3$ if you include quarks). What has not been experimentally observed is a discreteness of space and time. This may be due to the extreme smallness of candidates for elementary lengths and times such as the Planck length of about $10^{-33}$cm. and Planck time of about $10^{-43}$sec. It should also be mentioned that postulating discreteness avoids infinities and singularities that have plagued quantum field theory and general relativity theory.
 
 We shall describe the structure of the universe by a causal set or causet \cite{bdghs03,hen09,sor03,sur11,wc15}. Mathematically, a \textit{causet} is a finite partially ordered set $(x,<)$. For $a,b\in x$, we interpret $a<b$ as meaning that $b$ is in the causal future of $a$. If $a<b$ and there is no $c\in x$ with $a<c<b$ we say that $a$ \textit{is a parent of} $b$ and write $a\prec b$. Denoting the cardinality of $x$ by $\ab{x}$, a \textit{labeling} of $x$ is a map $\ell\colon x\to\brac{1,2,\ldots ,\ab{x}}$ such that $a<b$ implies $\ell (a)<\ell (b)$. A labeling of $x$ may be considered a ``birth order'' of the vertices of $x$. A \textit{covariant causet} ($c$-\textit{causet}) is a causet that has a unique labeling \cite{gud13,gud14,gudJ14,gud15}. In this article we shall only model possible universes by $c$-causets.
 
A \textit{path} in a $c$-causet $x$ is a finite sequence $a_1a_2\cdots a_n$ with $a_1\prec a_2\prec\cdots\prec a_n$. The \textit{height}
$h(a)$ of $a\in x$ is the cardinality, minus one, of a longest path in $x$ that ends with $a$. Two vertices $a,b$ are \textit{comparable} if
$a<b$ or $b<a$. It is shown in \cite{gud14} that a causet $x$ is a $c$-causet if and only if $a,b\in x$ are comparable whenever $h(a)\ne h(b)$. We call the set
\begin{equation*}
S_j(x)=\brac{a\in x\colon h(a)=j}
\end{equation*}
the $j$th \textit{shell} of $x$, $j=0,1,2,\ldots\,$. Letting $s_j(x)=\ab{S_j(x)}$, $j=0,1,2,\ldots ,k$, we call $(s_0(x),s_1(x),\ldots ,s_k(x))$ the
\textit{shell sequence} of $x$. A $c$-causet is uniquely determined by its shell sequence \cite{gud14}. Conversely, any finite sequence of positive integers is the shell sequence of a unique $x$-causet. We usually assume that $s_0(x)=1$ and the vertex labeled $1$ represents the big bang.

Let $\omega =a_1a_2\cdots a_n$ be a path in $x$ where $a_j\in\positive$ are the labels of the vertices. The \textit{length} of $\omega$ is
\begin{equation*}
\lscript (\omega )=\sqbrac{\sum _{j=1}^{n-1}(a_{j+1}-a_j)^2}^{1/2}
\end{equation*}
A \textit{geodesic} from $a$ to $b$ where $a<b$, is a path from $a$ to $b$ of smallest length. For $a<b$, we define the \textit{distance from}
$a$ \textit{to} $b$ to be $d(a,b)=\lscript (\omega )$ where $\omega$ is a geodesic from $a$ to $b$. It is shown in \cite{gud14} that if
$a<b<c$, then $d(a,c)\le d(a,b)+d(b,c)$. This shows that the triangle inequality holds for $d(a,b)$ when applicable. In this way, $d(a,b)$ is a weak type of metric. It is also shown in \cite{gud14} that a subpath of a geodesic is a geodesic. Of course, if $a<b$ then there is at least one geodesic from $a$ to $b$. The \textit{curvature} $K(a)$ of $a\in x$ is the number of geodesics, minus one, from the vertex labeled $1$ to
$a$ \cite{gud14,gudJ14,gud15}. The \textit{average curvature} is
\begin{equation*}
\cscript (x)=\sum\brac{K(a)\colon a\in x}/\ab{x}
\end{equation*}
We call a path from $1$ to a vertex in the top shell a \textit{maximal path} because such a path cannot be extended.

\begin{exam}{1}  
Let $x$ be the $c$-causet with shell sequence $(1,2,1)$. Using semicolons to separate shells, we can label the vertices by $(1;2,3;4)$. There are two maximal paths $1-2-4$ and $1-3-4$. Both of these paths have length $\sqrt{5}$ so they are both geodesics. We conclude that $d(1,4)=\sqrt{5}$ and $K(1)=-1$, $K(2)=K(3)=0$, $K(4)=1$. The average curvature becomes $\cscript (x)=0$.\qquad\qedsymbol
\end{exam}  

The shell sequence determines the ``shape'' or geometry of $x$. We view a $c$-causet $x$ as a framework or scaffolding of a possible universe. The vertices represent tiny cells that may or may not be occupied by a particle. Presumably these cells have four-dimensional Planckian volume, but we need not commit ourselves here. This geometry gives the kinematics of the system. The dynamics is described in terms of paths in $x$. It is natural to assume that particles tend to move along geodesics. Thus, they tend to collect around vertices with large curvature. (We shall have more to say about this in Section~4.) In this way, the curvature distribution of a universe determines the mass (energy) distribution. Notice that this is exactly opposite to the traditional approach in general relativity.

\section{Curvature} 
This section discusses the curvature distribution for $c$-causets. First of all, we are interested in finding $c$-causets with large curvatures. After all, our own universe has stars, black holes, galaxies and galaxy clusters. These certainly suggest the presence of huge, albeit local, curvatures even though our universe in the large is essentially flat. Although it is difficult to determine curvatures for arbitrary $c$-causets, it is easy to find the number of maximal paths. There are exceptions, but roughly speaking the more maximal paths we have, the more geodesics we have and hence, the more curvature. We then ask the following question. Given $n\in\positive$, which $c$-causets $x$ with
$\ab{x}=n$ have the largest number of maximal paths?

If $x$ has shell sequence $(1,s_1(x),s_2(x),\ldots ,s_k(x))$, the number of maximal paths is $s_1(x)s_2(x)\cdots s_k(x)$. Now
\begin{equation*}
n=\ab{x}-1=\sum _{i=1}^ks_i(x)
\end{equation*}
gives a partition of $n$ and we want to maximize $\Pi _{i=1}^ks_i(x)$. A partition of $n$, $n=\alpha _1+\alpha _2+\cdots +\alpha _k$,
$\alpha _i\in\positive$, is \textit{large} if $\alpha _1\alpha _2\cdots\alpha _k$ is a maximum among all partitions of $n$. The question posed at the end of the previous paragraph reduces to finding large partitions of $n$ because such partitions would determine shell sequence of
$c$-causets with the largest number of maximal paths.

To gain some intuition about this purely combinatorial problem, consider partitions of $10$. (Recall that the order of the numbers in a partition is immaterial.) A few partitions of $10$ are:
\begin{equation*}
10=9+1=8+2=7+3=6+4=5+5
\end{equation*}
The products in these partitions are: $9,16,21,24,25$. However, we get larger products if we consider partitions with three terms:
\begin{equation*}
10=5+3+2=4+3+3
\end{equation*}
The products become $30$ and $36$. In fact, the partitions with largest products are $3+3+4$ and $3+3+2+2$.

\begin{exam}{2}  
The large partitions for the first few positive integers are:
$2=2$, $3=3$, $4=4=2+2$, $5=3+2$, $6=3+3$, $7=3+4=3+2+2$, $8=3+3+2$, $9=3+3+3$, $10=3+3+4=3+3+2+2$, $11=3+3+3+2$, $12=3+3+3+3$, $13=3+3+3+4=3+3+3+2+2$, $14=3+3+3+3+2$, $15=3+3+3+3+3$, $16=3+3+3+3+4=3+3+3+3+2+2$.\quad\qedsymbol
\end{exam}  

After examining Example~2, the reader can easily make a conjecture given by the statement of Theorem~2.2. Our proof relies on the following simple lemma.

\begin{lem}       
\label{lem21}
If $\alpha,\beta\in\real$ satisfy $\alpha\ge 2$, $\beta >2$, then $\alpha +\beta<\alpha\beta$.
\end{lem}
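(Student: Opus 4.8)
The plan is to reduce the claimed strict inequality $\alpha+\beta<\alpha\beta$ to a product of factors that are manifestly controlled by the hypotheses $\alpha\ge 2$ and $\beta>2$. The key algebraic observation is the identity
\[
\alpha\beta-(\alpha+\beta)=(\alpha-1)(\beta-1)-1,
\]
which holds for all real $\alpha,\beta$ and converts the additive quantity on the left into something multiplicative.

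First I would record this identity (a one-line expansion). Next I would use the hypotheses to bound $(\alpha-1)(\beta-1)$ from below: since $\alpha\ge 2$ we have $\alpha-1\ge 1>0$, and since $\beta>2$ we have $\beta-1>1$; multiplying the strict inequality $\beta-1>1$ by the positive number $\alpha-1\ge 1$ gives $(\alpha-1)(\beta-1)\ge\beta-1>1$. Substituting back into the identity yields $\alpha\beta-(\alpha+\beta)>1-1=0$, which is exactly the assertion.

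An equivalent route, should one prefer to avoid the identity, is to note that $\alpha,\beta>0$, so one may divide the desired inequality by $\alpha\beta$; it becomes $\tfrac{1}{\beta}+\tfrac{1}{\alpha}<1$, and this follows at once from $\tfrac{1}{\alpha}\le\tfrac12$ and $\tfrac{1}{\beta}<\tfrac12$.

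There is no genuine obstacle here; the only point requiring care is that one of the two hypotheses must be strict for the conclusion to be strict (if both were merely $\ge 2$ one could have equality at $\alpha=\beta=2$), and the argument above uses exactly this by keeping the inequality $\beta-1>1$ strict throughout.
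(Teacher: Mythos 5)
Your proof is correct. It is close in spirit to the paper's argument but uses a different factorization: the paper starts from $0<(\alpha-1)(\beta-2)=\alpha\beta-2\alpha-\beta+2$, which directly encodes the asymmetric hypotheses ($\alpha\ge 2$ strict only in $\beta$) into the two factors, and then needs the small extra step $\alpha+\beta\le\alpha+\beta+(\alpha-2)<\alpha\beta$. You instead use the symmetric identity $\alpha\beta-(\alpha+\beta)=(\alpha-1)(\beta-1)-1$ and bound $(\alpha-1)(\beta-1)>1$, which lands on the conclusion in one step and makes it transparent where strictness is needed (exactly one of the two factors must exceed $1$ strictly). Your second route, dividing by $\alpha\beta>0$ to reduce the claim to $\tfrac1\alpha+\tfrac1\beta<1$, is the most elementary of the three and arguably the most illuminating, though all are one-line arguments. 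Your closing remark about the equality case $\alpha=\beta=2$ correctly identifies why one hypothesis must be strict.
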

\begin{proof}
We have that
\begin{align*}
0&<(\alpha -1)(\beta -2)=\alpha\beta -\beta -2\alpha +2
\intertext{Hence,}
\alpha +\beta&\le\alpha +\beta +(\alpha -2)<\alpha\beta\qedhere
\end{align*}
\end{proof}

\begin{thm}       
\label{thm22}
Let $n\in\positive$ with $n\ge 2$.\newline
If $n\equiv 0\pmod{3}$, then $n=3m$ and the large partition of $n$ is:
\begin{equation*}
n=3+3+\cdots +3\ (m\,3s)
\end{equation*}
If $n\equiv 2\pmod{3}$, then $n=3m+2$ and the large partition of $n$ is:
\begin{equation*}
n=3+3+\cdots +3+2\ (m\,3s)
\end{equation*}
If $n\equiv 1\pmod{3}$, then $n=3m+1=3(m-1)+4$ and the large partitions of $n$ are:
\begin{equation*}
n=3+3+\cdots +3+4=3+3+\cdots +3+2+2\ (m-1\,3s)
\end{equation*}
\end{thm}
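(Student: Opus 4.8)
The plan is to prove the theorem by a short sequence of exchange arguments. Starting from an arbitrary large partition of $n$, I will repeatedly replace a ``forbidden'' part, or combination of parts, by others having the same sum but a strictly larger product; since this is impossible for a large partition, no forbidden configuration can occur, and the residue of $n$ modulo $3$ will then pin down what remains. The target reduced form is a partition whose parts all lie in $\{2,3\}$.

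First I would eliminate the part $1$. If a large partition of $n\ge 2$ contained a $1$, it would also contain some other part $\alpha$, and merging the two into the single part $\alpha+1$ replaces the factor $\alpha\cdot 1$ by $\alpha+1>\alpha$, strictly increasing the product --- a contradiction. Next I would bound the parts by $4$: a part $p\ge 5$ can be written as $p=2+(p-2)$ with $p-2\ge 3>2$, so Lemma~\ref{lem21}, applied with $\alpha=2$ and $\beta=p-2$, gives $p=2+(p-2)<2(p-2)$; thus splitting $p$ into the two parts $2$ and $p-2$ strictly increases the product. Since $4=2+2$ and $2\cdot 2=4$, I may moreover replace every part equal to $4$ by two parts equal to $2$ without changing the product, so it suffices to find the large partitions among those with all parts in $\{2,3\}$, provided I remember at the end that any two of the resulting $2$'s may be recombined into a single $4$. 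Finally, such a partition has at most two $2$'s, since three $2$'s have sum $6=3+3$ while $2\cdot 2\cdot 2=8<9=3\cdot 3$, so replacing three $2$'s by two $3$'s strictly increases the product.

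Now write a large partition in reduced form as $a$ threes and $b$ twos with $b\in\{0,1,2\}$ and $3a+2b=n$. Reducing mod $3$ gives $b\equiv -n$, so $b=0$, $1$, or $2$ according as $n\equiv 0$, $2$, or $1\pmod 3$, and then $a$ is determined, equal to $m$, $m$, and $m-1$ respectively. This produces the partitions $3^{m}$, $3^{m}2$, and $3^{m-1}2^{2}$ of the statement, and when $n\equiv 1$ recombining the two $2$'s into a $4$ gives the second listed partition $3^{m-1}4$, of equal product. It remains to check that nothing else is large. An arbitrary large partition has all parts in $\{2,3,4\}$ by the previous paragraph; replacing each of its $c$ fours by $2+2$ yields a large partition with parts in $\{2,3\}$, which therefore has at most two $2$'s, forcing $c\le 1$, with $c=1$ possible only when no other $2$ is present. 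Hence a large partition is either the unique $\{2,3\}$-partition above ($c=0$) or, when $n\equiv 1$, the partition $3^{m-1}4$; in the cases $n\equiv 0$ and $n\equiv 2$ a $4$ cannot occur, so the large partition is unique.

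There is no deep obstacle in this argument; the only point demanding care is the bookkeeping around the equivalence $4\leftrightarrow 2+2$. One must track it precisely enough to conclude both that the case $n\equiv 1\pmod 3$ genuinely has two large partitions and that the other two cases have exactly one, and must check that every exchange move preserves the sum $n$ and strictly increases (never inadvertently decreases) the product.
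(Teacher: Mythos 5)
Your proposal is correct and follows essentially the same route as the paper: an exchange argument that splits large parts using Lemma~\ref{lem21} and trades three $2$s for two $3$s, then reads off the answer from $n\bmod 3$. In fact your write-up is somewhat more careful than the paper's, since you explicitly dispose of parts equal to $1$, handle the $4\leftrightarrow 2+2$ equivalence (which the paper's ``continue until all parts are $2$ or $3$'' glosses over), and argue uniqueness of the large partition in the cases $n\equiv 0,2\pmod 3$.
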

\begin{proof}
Take a partition of $n$, $n=\alpha _1+\alpha _2+\cdots +\alpha _k$. If $\alpha _1\ge 5$, then $\alpha _1=\beta +\gamma$, $\beta\ge 2$,
$\gamma >2$ and $\alpha _1<\beta\gamma$ by Lemma~\ref{lem21}. Hence, we obtain another partition of $n$
\begin{equation*}
n=\beta +\gamma +\alpha _2+\cdots +\alpha _k
\end{equation*}
where $\alpha _1\alpha _2\cdots\alpha _k<\beta\gamma\alpha _2\cdots\alpha _k$. Continue this process until we arrive at a partition
$n=\beta _1+\beta _2+\cdots +\beta _s$ where $\beta _i$ is 2 or 3. If there are more than two 2s, we have, say
\begin{equation*}
n=\beta _1+\beta _2+\cdots +\beta _t+2+2+2
\end{equation*}
But then $n=\beta _1+\beta _2+\cdots +\beta _t+3+3$ with
\begin{equation*}
\beta _1\beta _2\cdots\beta _t\ctimes 8<\beta _1\beta _2\cdots\beta _t\ctimes 9
\end{equation*}
which gives a larger product. Again, continue this process until we arrive at a partition of $n$ with fewer than three 2s and the rest are 3s. This gives a large partition. We have then obtained the three cases in the theorem.
\end{proof}

\begin{cor}       
\label{cor23}
The number of maximal paths in a $c$-causet with the largest number of such paths follow the sequence:
$2,3,2^2,2\ctimes 3,3^2,2^2\ctimes 3,2\ctimes 3^2,3^3,2^2\ctimes 3^2,2\ctimes 3^3,\ldots$
\end{cor}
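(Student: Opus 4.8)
The plan is to read the corollary off directly from Theorem~\ref{thm22}. Recall from the start of Section~2 that a $c$-causet with shell sequence $(1,s_1,s_2,\ldots,s_k)$ has exactly $s_1 s_2 \cdots s_k$ maximal paths, and that $n = \ab{x}-1 = s_1 + s_2 + \cdots + s_k$ exhibits $n$ as a partition; since every finite sequence of positive integers arises as a shell sequence and permuting the $s_i$ leaves the product unchanged, the largest number of maximal paths over all $c$-causets with $\ab{x}=n+1$ equals the product of the parts of a large partition of $n$ in the sense of Theorem~\ref{thm22}. So the entire task reduces to evaluating that product in each residue class of $n$ modulo $3$.

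First I would substitute the large partitions given by Theorem~\ref{thm22}. For $n=3m$ the large partition is $m$ copies of $3$, so the maximum product is $3^m$; for $n=3m+2$ it is $m$ copies of $3$ together with a single $2$, so the maximum is $2\cdot 3^m$; and for $n=3m+1$ it is $m-1$ copies of $3$ together with either one $4$ or two $2$'s, so in both cases the maximum is $4\cdot 3^{m-1}=2^2\cdot 3^{m-1}$. Next I would tabulate these values in order of increasing $n$, starting at $n=2$: the values $n=2,5,8,11,\ldots$ give $2,\ 2\cdot 3,\ 2\cdot 3^2,\ 2\cdot 3^3,\ldots$; the values $n=3,6,9,\ldots$ give $3,\ 3^2,\ 3^3,\ldots$; and the values $n=4,7,10,\ldots$ give $2^2,\ 2^2\cdot 3,\ 2^2\cdot 3^2,\ldots$. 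Interleaving these three progressions in the order forced by the size of $n$ produces exactly $2,3,2^2,2\cdot 3,3^2,2^2\cdot 3,2\cdot 3^2,3^3,2^2\cdot 3^2,2\cdot 3^3,\ldots$, which is the asserted sequence.

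I do not expect a serious obstacle here, since the corollary is pure bookkeeping layered on Theorem~\ref{thm22}; the only point that merits an explicit sentence is the case $n\equiv 1\pmod 3$, in which Theorem~\ref{thm22} supplies two distinct large partitions. There I would remark that $3+\cdots+3+4$ and $3+\cdots+3+2+2$ have the same product $2^2\cdot 3^{m-1}$, so ``the largest number of maximal paths'' is well defined, even though the extremal $c$-causet itself need not be unique (already permuting the shell sizes yields several extremal $c$-causets).
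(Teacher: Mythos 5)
Your proposal is correct and is exactly the intended derivation: the paper offers no separate proof of Corollary~\ref{cor23}, treating it as immediate from Theorem~\ref{thm22} via the identification of the maximal-path count with the product $s_1s_2\cdots s_k$ over the partition $n=\ab{x}-1=\sum s_i$. Your tabulation of the three residue classes and the remark that the two large partitions for $n\equiv 1\pmod 3$ share the product $2^2\cdot 3^{m-1}$ fill in the bookkeeping accurately.
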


It follows from Theorem~\ref{thm22} that $c$-causets whose shell sequences consist of all 3s or all 3s and a 2 or all 3s and two 2s (or a 4) have the largest number of maximal paths. (Unlike partitions, the positions of the 2s are important.) Although we have not proved this, we conjecture that besides a few exceptions, such $c$-causets have the largest average curvature. The many examples we have examined all substantiate this conjecture. We now present three of them.

\begin{exam}{3}  
Suppose $x$ has shell sequence $(1,3,3,3,3,3,3,3)$. This $c$-causet has 22 vertices and $3^7=2187$ maximal paths. The following table summarizes the distances and curvatures for $x$.
\bigskip

\begin{tabular}{c|c|c|c|c|c|c|c|c|c|c|c|c}
$i$&1&2&3&4&5&6&7&8&9&10&11&12\\
\hline
$d(1,i)$&0&1&2&3&$\sqrt{8}$&$\sqrt{13}$&$\sqrt{18}$&$\sqrt{17}$&$\sqrt{22}$&$\sqrt{27}$&$\sqrt{26}$&$\sqrt{31}$\\
\hline
$K(i)$&$-1$&0&0&0&0&1&0&2&2&0&5&3\\
\end{tabular}
\vskip 2pc
\begin{tabular}{c|c|c|c|c|c|c|c|c|c|c}
$i$&13&14&15&16&17&18&19&20&21&22\\
\hline
$d(1,i)$&$\sqrt{36}$&$\sqrt{35}$&$\sqrt{40}$&$\sqrt{45}$&$\sqrt{44}$&$\sqrt{49}$&$\sqrt{54}$&$\sqrt{53}$&$\sqrt{58}$&$\sqrt{63}$\\
\hline
$K(i)$&0&9&4&0&14&5&0&20&6&0\\
\noalign{\bigskip}
\multicolumn{11}{c}%
{\textbf{Table 1 (Distances and Curvatures)}}\\
\end{tabular}
\bigskip
\noindent We conclude that this $c$-causet has average curvature
\begin{equation*}
\cscript (x)=\frac{70}{22}\approx 3.18\qquad\qedsymbol
\end{equation*}
\end{exam} 
\medskip

\begin{exam}{4}  
Suppose $x$ has shell sequence $(1,2,2,2,2,2,2,2,2,2,2)$. This $c$-causet has 21 vertices and $2^{10}=1024$ maximal paths. The next table summarizes the distances and curvatures for $x$.
\bigskip

\begin{tabular}{c|c|c|c|c|c|c|c|c|c|c|c}
$i$&1&2&3&4&5&6&7&8&9&10&11\\
\hline
$d(1,i)$&0&1&2&$\sqrt{5}$&$\sqrt{8}$&$\sqrt{9}$&$\sqrt{12}$&$\sqrt{13}$&$\sqrt{16}$&$\sqrt{17}$&$\sqrt{20}$\\
\hline
$K(i)$&$-1$&0&0&1&0&2&0&3&0&4&0\\
\end{tabular}
\vskip 2pc
\begin{tabular}{c|c|c|c|c|c|c|c|c|c|c}
$i$&12&13&14&15&16&17&18&19&20&21\\
\hline
$d(1,i)$&$\sqrt{21}$&$\sqrt{24}$&$\sqrt{25}$&$\sqrt{28}$&$\sqrt{29}$&$\sqrt{32}$&$\sqrt{33}$&$\sqrt{36}$&$\sqrt{37}$&$\sqrt{40}$\\
\hline
$K(i)$&5&0&6&0&7&0&8&0&9&0\\
\noalign{\bigskip}
\multicolumn{11}{c}%
{\textbf{Table 2 (Distances and Curvatures)}}\\
\end{tabular}
\bigskip
\noindent The average curvature of this $c$-causet is
\begin{equation*}
\cscript (x)=\frac{44}{21}\approx 2.1\qquad\qedsymbol
\end{equation*}
\end{exam} 

\begin{exam}{5}  
Suppose $x$ has shell sequence $(1,4,4,4,4,4)$. This $c$-causet has 21 vertices and $4^5=1024$ maximal paths. The next table now summarizes the distances and curvatures for $x$.
\bigskip

\begin{tabular}{c|c|c|c|c|c|c|c|c|c|c|c}
$i$&1&2&3&4&5&6&7&8&9&10&11\\
\hline
$d(1,i)$&0&1&2&3&4&$\sqrt{13}$&$\sqrt{18}$&$\sqrt{25}$&$\sqrt{32}$&$\sqrt{27}$&$\sqrt{34}$\\
\hline
$K(i)$&$-1$&0&0&0&0&1&0&1&0&0&2\\
\end{tabular}
\vskip 2pc
\begin{tabular}{c|c|c|c|c|c|c|c|c|c|c}
$i$&12&13&14&15&16&17&18&19&20&21\\
\hline
$d(1,i)$&$\sqrt{41}$&$\sqrt{48}$&$\sqrt{43}$&$\sqrt{50}$&$\sqrt{57}$&$\sqrt{64}$&$\sqrt{59}$&$\sqrt{66}$&$\sqrt{73}$&$\sqrt{80}$\\
\hline
$K(i)$&2&0&2&5&3&0&8&9&4&0\\
\noalign{\bigskip}
\multicolumn{11}{c}%
{\textbf{Table 3 (Distances and Curvatures)}}\\
\end{tabular}
\bigskip
\noindent The average curvature of this $c$-causet is
\begin{equation*}
\cscript (x)=\frac{30}{21}\approx 1.71\qquad\qedsymbol
\end{equation*}
\end{exam} 

We have previously considered $c$-causets with high curvature. Let us now examine the other extreme which is low curvature. As before, these should be obtained from $c$-causets with the least number of maximal paths. We first eliminate uninteresting $c$-causets that can have a single vertex in a shell above the first shell. We say that a partition of $n$ is \textit{trivial} if it has the form $n=n$ or
$n=m+1+1+\cdots +1$. A partition of $n$ is \textit{small} if it is nontrivial and has the form $n=\alpha _1+\alpha _2+\cdots +\alpha _k$ where $\alpha _1\alpha _2\cdots\alpha _k$ is minimal among all nontrivial partitions of $n$. Again, to get some intuition about small partitions, we consider some examples.

\begin{exam}{6}  
Some nontrivial partitions of 10 are
\begin{equation*}
10=2+3+5=5+5=6+4=7+3=8+2
\end{equation*}
The products of the terms are $30,25,24,21$ and 16. The smallest is $10=8+2$. The small partitions of 6 and 7 are:
$6=4+2$, $7=5+2$. These motivate the following theorem.\qquad\qedsymbol
\end{exam} 

\begin{thm}       
\label{thm24}
If $n\ge 4$, then its unique small partition is $n=(n-2)+2$.
\end{thm}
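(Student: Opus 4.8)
The plan is to run the argument for Theorem~\ref{thm22} in reverse: instead of splitting a large summand to \emph{increase} the product, I would merge two summands to \emph{decrease} it. First note that, since $n\ge 4$, the partition $n=(n-2)+2$ has two summands, both at least $2$, so it is nontrivial, and its product is $2(n-2)$. Thus it suffices to show that every nontrivial partition $n=\alpha_1+\cdots+\alpha_k$ has product at least $2(n-2)$, with equality only for the partition $(n-2)+2$.

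For the lower bound, take a nontrivial partition, so $k\ge 2$ and every $\alpha_i\ge 2$. As long as $k\ge 3$, replace two of the parts $\alpha_i,\alpha_j$ by the single part $\alpha_i+\alpha_j$; this is again a partition of $n$ into parts $\ge 2$, with one fewer part, and it stays nontrivial until we are left with two parts. Relabelling so that $\beta$ denotes the larger of the two parts and $\alpha$ the other, Lemma~\ref{lem21} gives $\alpha+\beta\le\alpha\beta$, strict unless $\alpha=\beta=2$ (in which case both sides equal $4$), so the product never increases during the reduction. When it terminates we have a two-part partition $n=a+b$ with $a,b\ge 2$ whose product is at most the original one. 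Since $ab=a(n-a)$ is concave in $a$ on the interval $[2,n-2]$, it is smallest at the endpoints, where it equals $2(n-2)$; hence the original product is $\ge 2(n-2)$.

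The remaining point, which I expect to be the crux, is the equality case. If a nontrivial partition has product exactly $2(n-2)$, then every merge in the reduction above must have left the product unchanged, hence had the form $2+2\mapsto 4$; unwinding this shows the original partition consisted either entirely of $2$'s or of a single $2$ together with one larger part. In the latter case $ab=2(n-2)$ rearranges to $(a-2)(b-2)=0$, pinning the partition down to $(n-2)+2$. In the former case the product is $2^{n/2}$, which must be compared with $2(n-2)$: it strictly exceeds $2(n-2)$ once $n$ is large, so the all-$2$'s partition is excluded except possibly for the smallest even values of $n$, which have to be checked individually. The delicate bookkeeping is entirely in this last comparison and in the treatment of short partitions, and that is where I would be most careful to make sure no small $n$ escapes the argument.
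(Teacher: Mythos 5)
Your argument is essentially the paper's own proof of this theorem, not a reversal of it: the paper also reduces a $k$-term nontrivial partition to a $(k-1)$-term one by replacing two parts $\alpha,\beta$ with the single part $\alpha+\beta$ and invoking Lemma~\ref{lem21} to see the product does not increase, and then settles the two-term case $n=(n-m)+m$ by a direct computation showing $(n-m)m>2(n-2)$ whenever $2<m<n-2$ (your concavity remark about $a(n-a)$ on $[2,n-2]$ is the same estimate). So on the lower bound $\alpha_1\cdots\alpha_k\ge 2(n-2)$ the two proofs coincide.

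Where you differ is in the equality analysis, and your extra care there is warranted --- it exposes a defect in the statement that the paper's proof glosses over. The paper writes $(n-\alpha-\beta)(\alpha+\beta)>(n-2)\cdot 2$ for a nontrivial three-term partition, but this is an equality when $n-\alpha-\beta=2$; combined with the equality case $\alpha=\beta=2$ of Lemma~\ref{lem21}, the partition $6=2+2+2$ slips through with product $8=2(6-2)$, tying with $4+2$. Your deferred check of ``the smallest even values of $n$'' is exactly where this lives: $2^{n/2}=2(n-2)$ holds at $n=4$ (harmless, since there the all-$2$s partition \emph{is} $(n-2)+2$) and at $n=6$ (a genuine second small partition, so uniqueness as stated fails for $n=6$). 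Carrying out that check, you should report $n=6$ as an exception rather than expect the bookkeeping to close. One presentational suggestion: instead of unwinding your particular merge sequence, note that if $k\ge 3$ and the product equals $2(n-2)$, then merging an \emph{arbitrary} pair $\alpha_i,\alpha_j$ produces a nontrivial partition whose product is squeezed between the lower bound $2(n-2)$ and the original value $2(n-2)$, forcing $\alpha_i\alpha_j=\alpha_i+\alpha_j$ and hence $\alpha_i=\alpha_j=2$; since the pair was arbitrary, all parts equal $2$ and you land directly in your all-$2$s case without tracking which merges created which $4$s.
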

\begin{proof}
Suppose $m<n$ and $n=(n-m)+m$ is a nontrivial partition of $n$. Then $m\ge 2$ and $n-m\ge 2$. Suppose $m>2$ and $n-m>2$. Since $n>m+2$ we have
\begin{align*}
nm-2n&=n(m-2)>(m+2)(m-2)=m^2-4
\intertext{Hence,}
(n-m)m&=nm-m^2>2n-4=2(n-2)
\end{align*}
We conclude that the partition $n=(n-m)+m$ has larger product then the partition $n=(n-2)+2$ if $m>2$. Hence $n=(n-2)+2$ is the smallest partition for two term partitions. Now suppose
\begin{equation*}
n=\alpha +\beta +\sqbrac{n-(\alpha +\beta )}
\end{equation*}
is a nontrivial three term partition. Then from Lemma~\ref{lem21} we have that $\alpha\beta >\alpha +\beta$. It follows that
\begin{equation*}
(n-\alpha -\beta )\alpha\beta\ge (n-\alpha -\beta )(\alpha +\beta )
\end{equation*}
But $n=(n-\alpha -\beta )+(\alpha +\beta )$ is a two term nontrivial partition of $n$, so by our previous work we have that
\begin{equation*}
(n-\alpha -\beta )(\alpha +\beta )>(n-2)2
\end{equation*}
We now continue this process. For example, let
\begin{equation*}
n=\alpha +\beta +\gamma +\sqbrac{n-(\alpha +\beta +\gamma )}
\end{equation*}
be a nontrivial four term partition of $n$. We can form the three term partition
\begin{equation*}
n=(\alpha +\beta )+\gamma +\sqbrac{n-(\alpha +\beta +\gamma )}
\end{equation*}
and as before
\begin{equation*}
(n-\alpha -\beta -\gamma )\alpha\beta\gamma\ge (n-\alpha -\beta -\gamma )(\alpha +\beta )\gamma
\end{equation*}
which reduces to our previous case.
\end{proof}

The next two examples show that $c$-causets with shell sequences corresponding to small partitions have average curvature zero.

\begin{exam}{7}   
Consider a $c$-causet with shell sequence $(1,2,n)$. The maximal paths $1-2-4$ and $1-3-4$ are geodesics so $K(4)=1$. For the vertices $j=5,6,\ldots ,n+3$ we have the two paths $1-2-j$, $1-3-j$. If these two paths have the same length, then
\begin{equation*}
1+(j-2)^2=4+(j-3)^2
\end{equation*}
But this implies that $j=4$ which is a contradiction. Hence, only one of these two paths is a geodesic (it happens to be the second). Thus $K(j)=0$ for $j=5,6,\ldots ,n+3$. It follows that the average curvature is zero.\qquad\qedsymbol
\end{exam} 

\begin{exam}{8}   
Consider a $c$-causet with shell sequence $(1,6,2)$. The vertices are labeled $1,2,\ldots ,9$ where 8 and 9 are in the top shell. The paths to 8 are $1-2-8,\ldots ,1-7-8$ and the paths to 9 are $1-2-9,\ldots ,1-7-9$. It is easy to check that $1-4-8$ and $1-5-8$ are the geodesics to 8 and $1-5-9$ is the only geodesic to 9. Hence, $K(8)=1$, $K(1)=-1$ and $K(j)=0$, $j\ne -1,8$. It follows that the average curvature is zero. A similar analysis holds for any $c$-causet with shell sequence $(1,n,2)$ $n\ge 2$ in which case the average curvature is again
zero.\qquad\qedsymbol
\end{exam} 

Although they are useful for illustrating extreme cases of curvature, the $c$-causets considered in the last five examples are not suitable for describing our universe. According to Guideline~3 of the Introduction, our universe is expanding. From the $c$-causet viewpoint this means expanding both ``vertically'' and ``horizontally.'' The $c$-causets in Examples 3, 4 and 5 are not expanding ``horizontally'' while those in Examples 7 and 8 are not expanding ``vertically.''

Another property of our universe is that it is flat in the large which means that its average curvature is essentially zero. We have shown in previous work \cite{gud15} that this occurs for an expanding universe if the expansion is exponential. We have also shown that this zero average curvature does not hold for slower than exponential expansion \cite{gud15}. The simplest $c$-causet with an exponential expansion has shell sequence $(1,2,2^2,\ldots ,2^n)$. This is called the \textit{inflationary period} and we believe that this period ends at about $n=308$ \cite{gud15}. After this period, the system enters a multiverse period of parallel universes. All of the universes share a common inflationary period but then they separate and evolve on different paths. Which particular history path a universe takes depends on probabilities determined by quantum mechanics \cite{gudJ14,gud15}. (This quantum mechanics of the macroscopic picture is different, but possibly related to the quantum mechanics of the microscopic picture considered in the next section.) We have reason to believe that our particular universe has a pulsating growth \cite{gud15}.

We now briefly discuss the concept of time in this model. There are actually two types of time, geometrical time and chronological time. The geometrical time is built into the causal relation $a<b$ where $b$ is in the causal future of $a$. The chronological time is universe dependent and is the natural time that a universe ``ticks off'' after it completes a shell. After a universe completes a particular shell it goes into a new cycle until it completes a new shell by adding vertices one at a time. Presumably each cycle takes a Planck instant of about $10^{-43}$ sec.\ to complete. We estimate that our universe has about $10^{60}$ shells and that each shell on average has about $10^{25}$ vertices. In summary, one can say that the universe does not evolve \textit{in} time, the universe \textit{makes} the time.

\section{Quantum Mechanics} 
This section discusses quantum mechanics on a $c$-causet. In order to develop a quantum theory we shall need an analogue of
Schr\"odinger's or Dirac's equations which describe the evolution of quantum states. Recall that Schr\"odinger's equation has the form
\begin{equation}         
\label{eq31}
i\,\frac{\partial}{\partial t}\,\psi (t,\xunderbar )=H\psi (t,\xunderbar )=\sqbrac{\nabla ^2+V(\xunderbar )}\psi (t,\xunderbar )
\end{equation}
where $\psi$ is a wave function, $H$ the Hamiltonian, $\xunderbar$ is the 3-dimensional position, $\nabla ^2$ the Laplacian and $V$ is the potential energy of the system. To find a discrete analogue of \eqref{eq31} we have two obstacles to overcome. First, we do not have the concept of a derivative. Second, a partial derivative is actually a directional derivative and we do not have directions. We circumvent these problems by replacing a derivative by a difference and a direction by a path.

Let $x$ be an arbitrary $c$-causet. For $a\in x$, we say that a path $\omega$ in $x$ \textit{contains} $a$ and write $a\in\omega$ if $\omega$ has the form $a_1a_2\cdots a_n$ where $a_i=a$ for some $i$. For a path
\begin{equation*}
\omega =\cdots aa_1a_2\cdots a_{n-1}b\cdots
\end{equation*}
we define the \textit{covariant difference operator} $\nabla _\omega$ as follows. The \textit{domain} of $\nabla _\omega$ is
\begin{equation*}
\dscript (\nabla _\omega )=\brac{v\colon x\times x\to\complex\colon v(a,b)=0\hbox{ if }a\nless b\hbox{ and }a,b\notin\omega}
\end{equation*}
and for $v\in\dscript (\nabla _\omega )$ we have
\begin{equation*}
\nabla _\omega v(a,b)=d(a,a_{n-1})v(a,b)-d(a,b)v(a,a_{n-1})
\end{equation*}
Notice that if we suitably redefine the distance function $d$ so that $d\in\dscript (\nabla _\omega )$ then $\nabla _\omega d(a,b)=0$ which is why we call $\nabla _\omega$ the \textit{covariant} difference operator. We denote the length of $\omega$ from $a$ to $b$ by
$\lscript _a^b(\omega )$. Let $m>0$ be the mass (or energy) of a particle. Defining $\delta _\omega (a,b)=d(a,b)-\lscript _a^b(\omega )$, the \textit{free wave equation} at $(\omega ,a)$ is
\begin{equation}         
\label{eq32}
i\nabla _\omega v(ab)=m\delta _\omega (a,b)v(a,b)
\end{equation}
We call \eqref{eq32} the \textit{free} wave equation because it describes the wave amplitude of a particle that has no forces acting on it except gravity and gravity is not really a force anyway, it is geometry. Presumably, forces can be imposed by adding additional terms to \eqref{eq32}.

The motivation for \eqref{eq32} is the following. The lefthand side of \eqref{eq32} is a rate of change for $v$ and the coefficient of the righthand side is nonpositive so it corresponds to a resistance term. If $\omega$ is a geodesic, then the righthand side is zero and the particle moves unencumbered along $\omega$. Otherwise, the particle resists motion along $\omega$. The resistance is proportional to $m$ which is a measure of inertia and to the factor $\delta _\omega (a,b)$ which measures how far $\omega$ is from being a geodesic.

If a function $v$ is a solution to \eqref{eq32} for every $b\in\omega$ with $b>a$ and satisfies the \textit{initial condition} $v(a,b)=1$ for
$a\prec b$, then $v$ is a (\textit{free}) \textit{wave function at} $(\omega ,a)$. The initial condition and \eqref{eq32} determine $v$ uniquely using the next result.

\begin{lem}       
\label{lem31}
If $v$ is a wave function at $(\omega ,a)$ and $a< c\prec b$, then
\begin{equation}        
\begin{aligned} 
\label{eq33}
v(a,b)&=\frac{i\,d(a,b)}{i\,d(a,c)-m\delta _\omega (a,b)}\,v(a,c)\\\noalign{\medskip}
&=\frac{d(a,b)\sqbrac{d(a,c)-im\delta _\omega (a,b)}}{d(a,c)^2+m^2\delta _\omega (a,b)^2}\,v(a,c)
\end{aligned}
\end{equation}
\end{lem}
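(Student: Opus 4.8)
The proof is essentially an algebraic rearrangement of the free wave equation \eqref{eq32}, so I would keep it short. The plan is first to pin down the notation. Since $v$ is a wave function at $(\omega,a)$, equation \eqref{eq32} holds for the vertex $b\in\omega$ with $b>a$, and in the expression $\omega=\cdots a\,a_1a_2\cdots a_{n-1}b\cdots$ the operator $\nabla_\omega$ at $(a,b)$ refers to the $\omega$-predecessor $a_{n-1}$ of $b$. Because $a_{n-1}\prec b$ and $a<a_{n-1}$, the vertex $c$ in the hypothesis $a<c\prec b$ is exactly this predecessor, i.e.\ $c=a_{n-1}$; I would state this identification explicitly at the outset. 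With it, \eqref{eq32} reads
\[
i\sqbrac{d(a,c)v(a,b)-d(a,b)v(a,c)}=m\delta_\omega(a,b)\,v(a,b).
\]

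Next I would collect the two occurrences of $v(a,b)$ on one side, obtaining
\[
\sqbrac{i\,d(a,c)-m\delta_\omega(a,b)}v(a,b)=i\,d(a,b)\,v(a,c),
\]
and divide by the bracketed factor. The division is legitimate because $d(a,c)>0$: the vertices $a$ and $c$ are distinct and strictly labelled, so every path joining them, in particular a geodesic, has positive length; hence the bracket has nonzero imaginary part and cannot vanish. This yields the first displayed formula for $v(a,b)$ in \eqref{eq33}.

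For the second formula I would rationalize, multiplying numerator and denominator of $i\,d(a,b)/\sqbrac{i\,d(a,c)-m\delta_\omega(a,b)}$ by the complex conjugate $-i\,d(a,c)-m\delta_\omega(a,b)$ of the denominator, using that $m$, $d(a,c)$ and $\delta_\omega(a,b)=d(a,b)-\lscript_a^b(\omega)$ are real. The denominator then becomes $d(a,c)^2+m^2\delta_\omega(a,b)^2$ and the numerator becomes $d(a,b)\sqbrac{d(a,c)-im\delta_\omega(a,b)}$, which is the claimed expression.

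There is no genuine obstacle here; the only points that need care are the notational identification of $c$ with the $\omega$-predecessor of $b$ and the one-line remark on why the denominator does not vanish. I would also note, as a consequence, that \eqref{eq33} together with the initial condition $v(a,b)=1$ for $a\prec b$ determines $v$ along all of $\omega$ by stepping from parent to child, which is the uniqueness assertion made just before the lemma.
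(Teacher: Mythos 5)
Your proof is correct and follows the same route as the paper: substitute the definition of $\nabla_\omega$ into the free wave equation \eqref{eq32} with $c$ identified as the $\omega$-predecessor of $b$, solve for $v(a,b)$, and rationalize the denominator. The extra remarks on why $c=a_{n-1}$ and why the denominator is nonzero are details the paper omits but are consistent with its one-line argument.
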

\begin{proof}
Applying \eqref{eq32} we have that
\begin{equation*}
i\sqbrac{d(a,c)v(a,b)-d(a,b)v(a,c)}=m\delta _\omega (a,b)v(a,b)
\end{equation*}
Solving for $v(a,b)$ we obtain \eqref{eq33}.
\end{proof}

Iterating \eqref{eq33}, we obtain $v(a,b)$ uniquely. If $\omega$ is a geodesic, we obtain a much simpler expression.

\begin{cor}       
\label{cor32}
If $\omega$ is a geodesic from $a$ to $b$, then the wave function at $(\omega ,a)$ is given by $v(a,b)=d(a,b)/d(a,c)$ where $a\prec c$,
$c\in\omega$.
\end{cor}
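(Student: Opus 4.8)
The plan is to reduce the statement to Lemma~\ref{lem31} by exploiting the fact that the ``resistance'' factor $\delta _\omega$ vanishes along a geodesic. First I would observe that if $\omega$ is a geodesic from $a$ to $b$ then, by definition of the distance, $\lscript _a^b(\omega )=d(a,b)$, so $\delta _\omega (a,b)=d(a,b)-\lscript _a^b(\omega )=0$. More is true, and this is the only place the geodesic hypothesis is really used: since a subpath of a geodesic is again a geodesic (the fact quoted from \cite{gud14}), for \emph{every} vertex $u$ lying on $\omega$ with $a<u$ the portion of $\omega$ running from $a$ to $u$ is itself a geodesic, hence $\delta _\omega (a,u)=0$ as well.

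Next I would fix notation and iterate \eqref{eq33}. Write $\omega =a\,a_1\,a_2\cdots a_{n-1}\,b$ with $a\prec a_1\prec a_2\prec\cdots\prec a_{n-1}\prec b$, and set $c=a_1$, which is the vertex of $\omega$ immediately following $a$, so that $a\prec c$; the initial condition for a wave function gives $v(a,c)=v(a,a_1)=1$. If $n=1$ then $b=c$ and the asserted identity $v(a,b)=d(a,b)/d(a,c)=1$ is trivial. Otherwise, put $a_n:=b$ and apply \eqref{eq33} for $j=1,2,\ldots ,n-1$ with the roles of ``$c\prec b$'' played by $a_j\prec a_{j+1}$ (note $a<a_j$ holds since $a\prec a_1$). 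Because $\delta _\omega (a,a_{j+1})=0$ by the first paragraph, each such application of \eqref{eq33} collapses to $v(a,a_{j+1})=\dfrac{d(a,a_{j+1})}{d(a,a_j)}\,v(a,a_j)$.

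Finally I would multiply these $n-1$ relations together. The product telescopes, leaving $v(a,b)=\dfrac{d(a,b)}{d(a,a_1)}\,v(a,a_1)=\dfrac{d(a,b)}{d(a,c)}$, which is the claim. There is no substantial obstacle here—the argument is a straightforward iteration of Lemma~\ref{lem31}—but the point that genuinely needs care is checking that $\delta _\omega (a,a_j)=0$ at \emph{every} intermediate vertex $a_j$, not merely at the endpoint $b$; this is precisely what the ``subpath of a geodesic is a geodesic'' property delivers, and without it the intermediate $\delta _\omega$ terms could in principle fail to vanish and the telescoping would break down.
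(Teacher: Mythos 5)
Your proof is correct and takes exactly the same route as the paper, which simply cites Lemma~\ref{lem31} together with the fact that a subpath of a geodesic is a geodesic; you have merely written out the details the paper leaves implicit. In particular, your observation that the subpath property is what forces $\delta _\omega (a,a_j)=0$ at every intermediate vertex, so that each application of \eqref{eq33} collapses and the product telescopes, is precisely the intended argument.
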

\begin{proof}
Applying Lemma~\ref{lem31} and the fact that a subpath of a geodesic is a geodesic, gives the result.
\end{proof}

We now discuss the quantum formalism on a $c$-causet $x$. Let $\brac{a_1,a_2,\ldots ,a_n}$ be the shell $S_k(x)$ and let
$\Omega _k(x)$ be the set of paths from the vertex labeled 1 to vertices in $S_k(x)$. For $\omega\in\Omega _k(x)$ with $a_j\in\omega$ denote the wave function at $(\omega ,1)$ by $v_k(\omega ,a_j)$. There are two levels of quantum mechanics in this formalism. The lower level which we call the \textit{hidden level} is probably inaccessible to us observationally. This is the level of particle trajectories or paths in
$\Omega _k(x)$. The \textit{path Hilbert space} is the set of complex-valued functions $\kscript _k(x)=L_2(\Omega _k(x))$ on
$\Omega _k(x)$ with the standard inner product
\begin{equation*}
\elbows{f,g}=\sum\brac{\fbar (\omega )g(\omega )\colon\omega\in\Omega _k(x)}
\end{equation*}
A wave function $v_k(\omega )=v_k(\omega ,a_j)$ depending on $\omega$ gives a vector in $\kscript _k(x)$. Letting 1 be the identically one function in $\kscript _k(x)$ define
\begin{equation*}
N=\elbows{1,v_k(\omega )}=\sum\brac{v_k(\omega )\colon\omega\in\Omega _k(x)}
\end{equation*}
We define the \textit{probability vector} $\vhat _k$ by $\vhat _k(\omega )=v_k(\omega )/N$. It follows that $\elbows{1,\vhat _k}=1$

For $A,B\subseteq\Omega _k(x)$ define the \textit{decoherence functional} \cite{bdghs03,hen09,sor94,sur11}
\begin{equation*}
D_k(A,B)=\sum\brac{\overline{\vhat _k(\omega )}\vhat _k(\omega ')\colon\omega\in A,\omega '\in B}
\end{equation*}
Then $B\mapsto D_k(A,B)$ is a complex-valued measure on $\Omega _k(x)$ satisfying\newline
$D_k\paren{\Omega _k(x),\Omega _k(x)}=1$. Moreover, $\overline{D_k(A,B)}=D_k(B,A)$ and $M_{ij}=D_k(A_i,A_j)$ is a positive definite matrix for every $A_1,\ldots ,A_s\subseteq\Omega _k(x)$ \cite{bdghs03,sor94,sur11}. Notice that
\begin{equation*}
D_k(A,B)=\elbows{\chi _A\vhat _k,\vhat _k}\elbows{\vhat _k,\chi _B\vhat _k}
\end{equation*}
where $\chi _A$ is the characteristic function of $A$. Moreover, 
\begin{equation*}
D_k\paren{\brac{\omega},\brac{\omega '}}=\overline{\vhat _k(\omega )}\vhat _k(\omega ')
\end{equation*}
The function $\mu _k\colon 2^{\Omega _k(x)}\to\real ^+$ given by
\begin{equation*}
\mu _k(A)=D_k(A,A)=\ab{\sum _{\omega\in A}\vhat _k(\omega )}^2
\end{equation*}
is the corresponding $q$-\textit{measure} on $2^{\Omega _k(x)}$. The function $\mu _k$ is not additive in general, but is grade-2 additive 
\cite{gud14,hen09,sor94}. Because $\mu _k$ is not additive we do not call $\mu _k(A)$ the quantum probability of $A$. Instead, we call
$\mu _k(A)$ the \textit{quantum propensity} of $A$. The propensity $\mu _k$ is useful for describing quantum interference. For
$A,B\subseteq\Omega _k(x)$ with $A\cap B=\emptyset$, we say that $A$ and $B$ \textit{interfere} in
$\mu _k(A\cup B)\ne\mu _k(A)+\mu _k(B)$.

The higher quantum level which we call the \textit{position level} is accessible to us by measurements and is the quantum theory usually considered. We define the \textit{position Hilbert space} by $H_k(x)=L_2(S_k(x))$ with the standard inner product. Define the vector
$v_k(a_j)\in H_k(x)$ by
\begin{align*}
v_k(a_j)&=\sum\brac{v_k(\omega ,a_j)\colon\omega\in\Omega _k(x),a_j\in\omega}
\intertext{Let}
N_1&=\sqbrac{\sum _{j=1}^n\ab{v_k(a_j)}^2}^{1/2}
\end{align*}
and normalize $v_k(a_j)$ to obtain the \textit{state} $\psi _k(a_j)=v_k(a_j)/N_1$. Thus, $\psi _k$ is a unit vector in $H_k(x)$. For
$A\subseteq S_k(x),\chi _A$ gives a projection operator $P(A)$ and the probability that the particle is in $A$ is given by
\begin{equation*}
p_k(A)=\elbows{\psi _k,P(A)\psi _k}=\sum\brac{\ab{\psi _k(a_j)}^2\colon a_j\in A}
\end{equation*}
In particular, the probability that the particle is at $a_j$ becomes
\begin{equation*}
p_k(a_j)=\elbows{\psi _k,P\paren{\brac{a_j}}\psi _k}=\ab{\psi _(a_j)}^2
\end{equation*}
For any $j<k$ we have the state $\psi _j$ defined in a similar way on $L_2\paren{S_j(x)}$. Note that the state $\psi _k$ is completely determined by the mass $m$ of the particle and the geometry (gravity) of the $c$-causet.

\section{Quantum Mechanical Examples} 
In these examples we shall only consider the position level of the quantum theory and the highest shell in the $c$-causet.

\begin{exam}{9}  
We begin with the simplest nontrivial universe; namely, a $c$-causet with shell sequence $(1,2,2)$. The labeled vertices are $(1;2,3;4,5)$. We interpret the vertex 1 as representing the entire inflationary period of a universe. The four paths in this $c$-causet are
$\omega _1\colon 1-2-4$, $\omega _3\colon 1-3-4$, $\omega _3\colon 1-2-5$ and $\omega _4\colon 1-3-5$. Notice that $\omega _1$,
$\omega _2$ and $\omega _4$ are geodesics and $K(4)=1$, $K(5)=0$. Applying Lemma~\ref{lem31}, the wave function becomes
\begin{align*}
v(\omega _1,4)&=\frac{d(1,4)}{d(1,2)}=\sqrt{5}\\\noalign{\smallskip}
v(\omega _2,4)&=\frac{d(1,4)}{d(1,3)}=\frac{\sqrt{5}}{2}\\\noalign{\smallskip}
v(\omega _4,5)&=\frac{d(1,5)}{d(1,3)}=\frac{\sqrt{8}}{2}=\sqrt{2}\\\noalign{\smallskip}
v(\omega _3,5)&=\frac{i\,d(1,5)}{i\,d(1,2)-m\delta _{\omega _3}(1,5)}=\frac{i\sqrt{8}}{i+m\paren{\sqrt{10}-\sqrt{8}}}\\\noalign{\smallskip}
&=\frac{\sqrt{8}\sqbrac{1+im\paren{\sqrt{10}-\sqrt{8}}}}{1+m^2\paren{\sqrt{10}-\sqrt{8}}^2}\\\noalign{\smallskip}
&=\frac{\sqrt{8}}{1+\paren{18-8\sqrt{5}}m^2}+ i\,\frac{\paren{4\sqrt{5}-8}m}{1+\paren{18-8\sqrt{5}}m^2}
\end{align*}
We now compute the functions $v(j)=\sum _iv(\omega _i,j)$. We have that
\begin{align*}
v(4)&=v(\omega _1,4)+v(\omega _2,4)=\tfrac{3}{2}\,\sqrt{5}\\
v((5)&=v(\omega _3,5)+v(\omega _4,5)\\
&=\sqbrac{\frac{\sqrt{18}}{1+\paren{18-8\sqrt{5}}m^2}+\sqrt{2}}+\frac{i\paren{4\sqrt{5}-8}m}{{1+\paren{18-8\sqrt{5}}m^2}+\sqrt{2}}
\end{align*}
The normalization constant becomes
\begin{equation*}
N_1^2=\ab{v(4)}^2+\ab{v(5)}^2=\frac{53}{4}+\frac{16}{1+\paren{18-8\sqrt{5}}m^2}
\end{equation*}
The state is $\psi (4)=v(4)/N_1$, $\psi (5)=v(5)/N_1$. This gives the probability
\begin{equation*}
p(4)=\ab{\psi (4)}^2
=\frac{\ab{v(4)}^2}{N_1^2}=\frac{45\sqbrac{1+\paren{18-8\sqrt{5}}m^2}}{64+53\sqbrac{1+\paren{18-8\sqrt{5}}m^2}}
\end{equation*}
Of course, $p(5)=1-p(4)$. These probabilities are actually functions of $m$ and we write $p(j,m)$. To be precise, $p(j,m)$ is not defined for $m=0$ and we write
\begin{equation*}
p(j,0)=\lim _{m\to 0}p(j,m)
\end{equation*}
If one graphs $p(4,m)$ as a function of $m$, one obtains an increasing function with values $p(4,0)=0.38462$, $p(4,2.5)=0.49601$ and
$p(4,10)=0.84901$. In fact
\begin{equation*}
\lim _{m\to\infty}p(4,m)=0.94906
\end{equation*}
This gives an unusual behavior at low mass (energy). For $m<2.5$ the particle prefers to move toward the lower curvature vertex~5 than toward the higher curvature vertex~4. This is either a strange behavior or may indicate that about $m=2.5$ is a lower bound for possible values of $m$.\qquad\qedsymbol
\end{exam}  

\begin{exam}{10}  
We now consider a slightly more complicated universe with a $c$-causet having shell sequence $(1,3,2)$. The labeled vertices are
$(1;2,3,4;5,6)$. The six paths are $\omega _1\colon 1-2-5$, $\omega _2\colon 1-3-5$, $\omega _3\colon 1-4-5$, $\omega _4\colon 1-2-6$, $\omega _5\colon 1-3-6$, $\omega _6\colon 1-4-6$. We see that $\omega _2$, $\omega _5$ and $\omega _6$ are geodesics and $K(5)=0$, $K(6)=1$. Applying Lemma~\ref{lem31}, the wave function becomes
\begin{align*}
v(\omega _1,5)&=\frac{i\,d(1,5)}{i\,d(1,2)-m\delta _{\omega _1}(1,5)}=\frac{i\sqrt{8}}{i+m\paren{\sqrt{10}-\sqrt{8}}}\\\noalign{\smallskip}
&=\frac{\sqrt{8}\sqbrac{1+im\paren{\sqrt{10}-\sqrt{8}}}}{1+m^2\paren{\sqrt{10}-\sqrt{8}}^2}\\\noalign{\smallskip}
v(\omega _2,5)&=\frac{d(1,5)}{d(1,3)}=\frac{\sqrt{8}}{2}=\sqrt{2}\\
v(\omega _3,5)&=\frac{i\,d(1,5)}{i\,d(1,4)-m\delta _{\omega _3}(1,5)}=\frac{i\sqrt{8}}{3i+m\paren{\sqrt{10}-\sqrt{8}}}\\\noalign{\smallskip}
&=\frac{\sqrt{8}\sqbrac{3+im\paren{\sqrt{10}-\sqrt{8}}}}{9+m^2\paren{\sqrt{10}-\sqrt{8}}^2}\\\noalign{\smallskip}
v(\omega _4,6)&=\frac{i\,d(1,6)}{i\,d(1,2)-m\delta _{\omega _4}(1,6)}=\frac{i\sqrt{13}}{i+m\paren{\sqrt{17}-\sqrt{13}}}\\\noalign{\smallskip}
&=\frac{\sqrt{13}\sqbrac{1+im\paren{\sqrt{17}-\sqrt{13}}}}{1+m^2\paren{\sqrt{17}-\sqrt{13}}^2}\\\noalign{\smallskip}
v(\omega _5,6)&=\frac{d(1,6)}{d(1,3)}=\frac{i\sqrt{13}}{2}\\\noalign{\smallskip}
v(\omega _6,6)&=\frac{d(1,6)}{d(1,4)}=\frac{i\sqrt{13}}{3}
\end{align*}
Hence,
\begin{align*}
v(5)&=\sqbrac{\frac{\sqrt{8}}{1+m^2\paren{\sqrt{10}-\sqrt{8}}^2}+\sqrt{2}+\frac{3\sqrt{8}}{9+m^2\paren{\sqrt{10}-\sqrt{8}}^2}}\\\noalign{\smallskip}
  &\quad+i\sqrt{8}m\paren{\sqrt{10}-\sqrt{8}}
  \sqbrac{\frac{1}{1+m^2\paren{\sqrt{10}-\sqrt{8}}^2}+\frac{1}{1+m^2\paren{\sqrt{10}-\sqrt{8}}^2}}\\\noalign{\smallskip}
  v(6)&=\sqrt{13}\sqbrac{\frac{5}{6}+\frac{1}{1+m^2\paren{\sqrt{17}-\sqrt{13}}^2}}
  +\frac{im\paren{\sqrt{17}-\sqrt{13}}}{1+m^2\paren{\sqrt{17}-\sqrt{13}}^2}
\end{align*}
As in Example~9, we can compute $\psi (5)=v(5)/N_1$ and $\psi (6)=v(6)/N_1$. We then find the probability
\begin{equation*}
p(6)=\frac{\ab{v(6)}^2}{\ab{v(5)}^5+\ab{v(6)}^2}=\frac{1}{\ab{\frac{v(5)}{v(6)}}^2+1}
\end{equation*}
As a function of $m$ we have that $p(6,0)=0.61905$ and
\begin{equation*}
\lim _{m\to\infty}p(6,m)=0.81864
\end{equation*}
The function $p(6,m)$ is essentially increasing as a function of $m$.\qquad\qedsymbol
\end{exam}  

\begin{exam}{11}  
Our final example is the $c$-causet with shell sequence $(1,2,3)$ and labeled vertices $(1;2,3;4,5,6)$. The six paths are
$\omega _1\colon 1-2-4$, $\omega _2\colon 1-3-4$, $\omega _3\colon 1-2-5$, $\omega _4\colon 1-3-5$, $\omega _5\colon 1-2-6$ and
$\omega _6\colon 1-3-6$. We see that $\omega _1$, $\omega _2$, $\omega _4$ and $\omega _6$ are geodesics and $K(4)=1$, $K(5)=K(6)=0$. Applying Lemma~\ref{lem31}, the wave function becomes
\begin{align*}
v(\omega _1,4)&=\frac{d(1,4)}{d(1,2)}=\sqrt{5}\\\noalign{\smallskip}
v(\omega _2,4)&=\frac{d(1,4)}{d(1,3)}=\frac{\sqrt{5}}{2}\\\noalign{\smallskip}
v(\omega _3,5)&=\frac{i\,d(1,5)}{i\,d(1,2)-m\delta _{\omega _3}(1,5)}=\frac{i\sqrt{8}}{i+m\paren{\sqrt{10}-\sqrt{8}}}\\\noalign{\smallskip}
&=\frac{\sqrt{8}\sqbrac{1+im\paren{\sqrt{10}-\sqrt{8}}}}{1+m^2\paren{\sqrt{10}-\sqrt{8}}^2}\\\noalign{\smallskip}
v(\omega _4,5)&=\frac{d(1,5)}{d(1,3)}=\frac{\sqrt{8}}{2}=\sqrt{2}\\\noalign{\smallskip}
v(\omega _5,6)&=\frac{i\,d(1,6)}{i\,d(1,2)-m\delta _{\omega _5}(1,6)}=\frac{i\sqrt{13}}{i+m\paren{\sqrt{17}-\sqrt{13}}}\\\noalign{\smallskip}
&=\frac{\sqrt{13}\sqbrac{1+im\paren{\sqrt{17}-\sqrt{13}}}}{1+m^2\paren{\sqrt{17}-\sqrt{13}}^2}\\\noalign{\smallskip}
v(\omega _6,6)&=\frac{d(1,6)}{d(1,3)}=\frac{\sqrt{13}}{2}
\end{align*}
We conclude that
\begin{align*}
v(4)&=\tfrac{3}{2}\sqrt{5}\\\noalign{\smallskip}
v(5)&=\sqbrac{\frac{\sqrt{8}}{1+m^2\paren{\sqrt{10}-\sqrt{8}}^2}+\sqrt{2}}
  +\frac{i\sqrt{8}\,m\paren{\sqrt{10}-\sqrt{8}}}{1+m^2\paren{\sqrt{10}-\sqrt{8}}^2}\\\noalign{\smallskip}
v(6)&=\sqrt{13}\sqbrac{\frac{1}{1+m^2\paren{\sqrt{17}-\sqrt{13}}^2}+\frac{1}{2}}
  +\frac{i\sqrt{13}\,m\paren{\sqrt{17}-\sqrt{13}}}{1+m^2\paren{\sqrt{17}-\sqrt{13}}^2}
\end{align*}
It follows that
\begin{align*}
N_1^2&=\ab{v(4)}^2+\ab{v(5)}^2+\ab{v(6)}^2\\\noalign{\smallskip}
&=\frac{33}{2}+\frac{16}{1+m^2\paren{\sqrt{10}-\sqrt{8}}^2}+\frac{26}{1+m^2\paren{\sqrt{17}-\sqrt{13}}^2}
\end{align*}
Hence,
\begin{align*}
p(4,m)&=\frac{45}{4N_1^2}\\\noalign{\smallskip}
p(5,m)&=\frac{\frac{16}{1+m^2\paren{\sqrt{10}-\sqrt{8}}^2}+2}{N_1^2}\\\noalign{\smallskip}
p(6,m)&=\frac{\frac{13}{14}+\frac{26}{1+m^2\paren{\sqrt{17}-\sqrt{13}}^2}}{N_1^2}
\end{align*}
We conclude that $p(4,0)=0.192308$, $p(5,0)=0.307692$ and $p(6,0)=0.500$. We also have that
\begin{align*}
\lim _{m\to\infty}p(4,m)=\tfrac{45}{66}=0.681818\\
\lim _{m\to\infty}p(5,m)=\tfrac{4}{33}=0.121212\\
\lim _{m\to\infty}p(6,m)=\tfrac{13}{66}=0.19697
\end{align*}
The graph of $p(4,m)$ is increasing and the graphs of $p(5,m)$ and $p(6,m)$ are both decreasing. All these examples show that, at least for masses above a certain moderate level, particles tend to move toward vertices of larger curvature.\qquad\qedsymbol
\end{exam}  

\end{document}